\newtheorem{thm}{Theorem}[section]
\newtheorem{lem}[thm]{Lemma}
\newtheorem{defn}[thm]{Definition}
\newtheorem{prop}[thm]{Proposition}
\begin{document}
% Title portion
\title{Quantum Entanglement in Time}
\author{Marcin Nowakowski\footnote{Electronic address: mnowakowski@mif.pg.gda.pl}}
\affiliation{Faculty of Applied Physics and Mathematics,
~Gdansk University of Technology, 80-952 Gdansk, Poland}
\affiliation{National Quantum Information Center of Gdansk, Andersa 27, 81-824 Sopot, Poland}

\pacs{03.67.-a, 03.65.Ud}

\begin{abstract}
In this paper we present a concept of quantum entanglement in time in a context of entangled consistent histories. These considerations are supported by presentation of necessary tools closely related to those acting on a space of spatial multipartite quantum states. We show that in similarity to monogamy of quantum entanglement in space, quantum entanglement in time is also endowed with this property for a particular history. Basing on these observations, we discuss further bounding of temporal correlations and derive analytically the Tsirelson bound implied by entangled histories for the Leggett-Garg inequalities.
\end{abstract}

\maketitle
% Head 1
\section{INTRODUCTION}
Recent years have proved a great interest of quantum entanglement concept showing its broad application in quantum communication theory. Spatial quantum correlations and especially their non-locality became a central subject of quantum information theory and their applications to quantum computation, yet potential applications of temporal non-local correlations are poorly analyzed. The crucial issue relates to the very nature of time, thus, temporal correlations phenomenon is a subject of many open questions within the framework of modern quantum and relativistic theories.

Non-local nature of quantum correlations in space has been accepted as a consequence of violation of local realism, expressed in Bell's theorem \cite{Bell} and analyzed in many experiments \cite{Aspect, Freedman}. As an analogy for temporal correlations, the violation of macro-realism \cite{LGI2} and Legett-Garg inequalities \cite{LGI} seem to indicate non-local effects in time, and they are a subject of many experimental considerations \cite{EX1, EX2, EX3, EX4}.

In this paper we discuss a variation of the consistent histories approach \cite{RG1, RG2, RG3, RG4} with an introduced concept of entangled histories \cite{CJ1,CJ2} built on a tensor product of projective Hilbert spaces, that can be considered as a potential candidate of a mathematical structure representing quantum states entangled in time.

In particular, we focus on showing that entangled histories demonstrate monogamous properties reflecting the spatial phenomenon.

However, it is crucial to note that in this context many 'obvious' facts about structure and behavior of spatial correlations and tensor algebra of spatial quantum states cannot be easily transferred into the temporal domain as the tensor structure of temporal correlations is richer due to the binding evolution between instances of 'time' and the observation-measurement phenomenon that is also a subject of this paper.

The outline of this paper is as follows: in the first section, we present the key concepts of consistent histories approach \cite{RG3} and present some new concepts related to entangled histories \cite{WC1, WC2} which are substantial for analysis of monogamies and entanglement in time as such.

In the section related to monogamy of quantum entanglement in space, we recall the local realistic assumptions about physical reality and their implications articulated as Bell inequalities. We discuss also the concept of monogamy of spatial quantum entanglement.

In the section focused on quantum entanglement in time, we introduce partial trace on quantum histories and show that quantum entanglement in time is monogamous for a particular history. This section considers also this property from a perspective of the Feynman's path integral approach.

In the final section, the Tsirelson bound on quantum correlations in time for Legget-Garg inequalities is derived from the entangled histories.

\section{ENTANGLED HISTORIES}

The decoherent histories theory (or consistent histories theory) has a long tradition \cite{Hartle1, Hartle2, Hartle3, RG1,RG2, RG3,RG4,RG5, RG6, CJ1, CJ2} and is built on the ground of well known and broadly applied Feynman's path integrals theory \cite{Feynman} for calculation of probability amplitudes of quantum processes, especially in quantum field theory or quantum electrodynamics. It is presented also as a generalization of quantum mechanics applied to closed systems such as the universe as a whole and discussed as a necessary element of future quantum gravity theory \cite{Hartle1}.

For readers interested in deepening this matter, it might be useful to refer to the literature \cite{RG3, Hartle1, Hartle2, Hartle3, Hartle4}. In this section we focus on introduction to the concept of a consistent history and its recent modification, an entangled history \cite{WC1, WC2}. We present also a proposal of the temporal partial trace operator \cite{MNPH5} acting on $\mathcal{C}^{*}$-Algebra of history operators as a tool necessary to achieve reduced histories, in similarity to the partial trace operator acting on a multipartite quantum state.

For the sake of the concept of a consistent history, it is substantial to note that for an evolving system (e.g. a non-relativistic particle being in an initial state $|\psi_{0}\rangle$ which evolution is governed by the Hamiltonian $H$), we can ask questions about the states of the system at different
times $t_{1}<t_{2}<...<t_{n}$. It could be performed during the repeating measuring process where a question at time $t_{x}$ could be represented naturally by a projector $P_{x}$. The alternatives at a given time $t_{x}$ form an exhaustive orthogonal set of projectors $\{P_{x}^{\alpha_{x}}\}$  where:
\begin{eqnarray}
&&\sum_{\alpha_{x}}P_{x}^{\alpha_{x}}=\mathbb{I} \\
&&P_{x}^{\alpha_{x}}P_{x}^{\tilde{\alpha}_{x}}=\delta_{\alpha_{x}\tilde{\alpha}_{x}} P_{x}^{\alpha_{x}}
\end{eqnarray}

Therefore, the alternative histories could be represented by the sets of alternative operators $\{P_{1}^{\alpha_{1}}\}$, $\{P_{2}^{\alpha_{2}}\}$,\ldots, $\{P_{n}^{\alpha_{n}}\}$ at different times $t_{1}<t_{2}<...<t_{n}$. A particular history is then represented as a tensor product $Proj(\mathcal{H})\ni |H)=P_{n}^{\alpha_n}\odot P_{n-1}^{\alpha_{n-1}}\odot\ldots\odot P_{1}^{\alpha_1}$. This could be perceived that the system had a property $P_{i}^{\alpha_{i}}$ at time $t_{i}$ \cite{RG3}.

We could interpret that during this process we project the global state of the system onto the n-fold tensor product $\bigodot_{i=1}^{n}P_{i}^{\alpha_{i}}$ achieving a consistent wave function which can be used to deduce probabilities of the events \cite{MNPH5} in accordance with the Born rule.

The fundamental tool introduced in the consistent history framework which connects different times is the bridging operator \cite{RG1} $\mathcal{B}(t_{2},t_{1})$. It is a counterpart of an unitary evolution operation having the following properties:
\begin{eqnarray}
\mathcal{B}(t_{2},t_{1})^{\dag}&=&\mathcal{B}(t_{1},t_{2}) \\
\mathcal{B}(t_{3},t_{2})\mathcal{B}(t_{2},t_{1})&=&\mathcal{B}(t_{3},t_{1})
\end{eqnarray}
and can be represented for a unitary quantum evolution as $\mathcal{B}(t_{2},t_{1})=\exp(-iH(t_{2}-t_{1}))$ (with the evolution governed be a Hamiltonian $H$).

Since we assumed for a given time that $\sum_{\alpha_{x}}P_{x}^{\alpha_{x}}=\mathbb{I}$, for the sample space of consistent histories $|H^{\overline{\alpha}})=P_{n}^{\alpha_{n}}\odot P_{n-1}^{\alpha_{n-1}}\odot\ldots\odot P_{1}^{\alpha_{1}}\odot P_{0}^{\alpha_{0}}$ ($\overline{\alpha}=(\alpha_{n}, \alpha_{n-1},\ldots, \alpha_{0})$) there holds $\sum_{\overline{\alpha}}|H^{\overline{\alpha}})=\mathbb{I}$.

Further, the consistent histories formalism introduces the chain operator $K(|H^{\overline{\alpha}}))$ which can be directly associated with a time propagator of a given quantum process:
\begin{equation}
K(|H^{\overline{\alpha}}))=P_{n}^{\alpha_n}\mathcal{B}(t_{n},t_{n-1}) P_{n-1}^{\alpha_{n-1}}\ldots\mathcal{B}(t_{2},t_{1})P_{1}^{\alpha_1}\mathcal{B}(t_{1},t_{0})P_{0}^{\alpha_0}
\end{equation}

Equipped with this operator, one can associate a history $|H^{\alpha})$ with its weight:
\begin{equation}
W(|H^{\alpha}))=TrK(|H^{\alpha}))^{\dagger}K(|H^{\alpha}))
\end{equation}
being by Born rule a counterpart of relative probability and can be interpreted as a probability of a history realization.

As an example, suppose that the system is in a state $|\psi_{0}\rangle\in\mathcal{H}$ at time $t_{0}$ and evolves to time $t_{2}$ under the bridging operator $\mathcal{B}(t_{1},t_{0})$, then applying the Born rule one can determine the probability that the system at time $t_{1}$ has a property $P_{t_{1}}$:
\begin{eqnarray}
Pr(P_{t_{1}},t_{1})&=&\|P_{t_{1}}\mathcal{B}(t_{1},t_{0})|\psi_{0}\rangle\|^2 \\
&=& \langle\psi_{0} |\mathcal{B}^{\dag}(t_{1},t_{0}) P_{t_{1}} \mathcal{B}(t_{1},t_{0})|\psi_{0}\rangle\\
&=& Tr(\mathcal{B}^{\dag}(t_{1},t_{0}) P_{t_{1}} \mathcal{B}(t_{1},t_{0})[\psi_{0}])\\
\end{eqnarray}
where $[\psi_{0}]=|\psi_{0}\rangle\langle \psi_{0}|$ as discussed further.

The set of histories is coarse-grained as the alternatives are defined for chosen times, yet not for every possible time \cite{Hartle1, Hartle2}. It means that the set of potential histories is partitioned into the set of mutually exclusive classes called coarse-grained histories, those which are observable during the process of measurements. Coarse graining of measurements is a natural feature of "standard" quantum mechanics. The consistent histories theory describes also fine-grained histories and relations between the sets of coarse-grained and fine-grained histories, however, this is not a subject of this presentation and it does not change generality of the following conclusions.

Recent years show also an extensive discussion about a subject of the so-called consistency or decoherence of allowed histories \cite{Hartle1, Hartle2, Hartle3} which is directly related to the degree of interference between pairs of histories in the set of histories. The consistent histories framework assumes that the family \footnote{The family of consistent histories is such a set of histories $\mathcal{F}=\{|H^{\overline{\alpha}})\}_{\overline{\alpha}=(\alpha_{n}, \alpha_{n-1},\ldots, \alpha_{0})}$ that $\sum_{\overline{\alpha}}|H^{\overline{\alpha}})=\mathbb{I}$ and any pair of histories from the set meets the consistency condition.}
of histories is consistent, i.e. one can associate with a union of histories a weight equal to the sum of weights
associated with particular histories included in the union \cite{WC1, WC2}. This implies the following \textit{consistency condition}:
\begin{equation}
 \left\lbrace
           \begin{array}{l}
(H^{\alpha}|H^{\beta})\equiv TrK(|H^{\alpha}))^{\dagger}K(|H^{\beta}))=0 \; for\; \alpha\neq\beta \\
 (H^{\alpha}|H^{\beta})=0 \; or \; 1 \\
 \sum_{\alpha}c_{\alpha}|H^{\alpha})=I\; for\; c_{\alpha} \in \mathbb{C} \\
\end{array}
         \right.
\end{equation}

There are different conditions for the so-called decoherence functional $TrK(|H^{\alpha}))^{\dagger}K(|H^{\beta}))$ discussed, including the weaker condition that $TrK(|H^{\alpha}))^{\dagger}K(|H^{\beta}))\approx \delta_{\alpha\beta} P(\alpha)$ (medium decoherence and $P(\alpha)$ standing for probability of a history $|H^{\alpha})$) or the linear positivity condition by Goldstein and Page \cite{Goldstein}, however, as observed by F. Wilczek \cite{WC1, WC2}, it is unclear at this moment if the variants are significant.

It is helpful to assume normalization of histories with non-zero weight which enables normalization of probability distributions for history events, i.e.: $|\widetilde{H})=\frac{|H)}{\sqrt{(H|H)}}$ \cite{WC1, WC2, MNPH5}.

If the observed system starts its potential history in a pure state
$P_{t_{0}}=|\Psi_{0}\rangle\langle\Psi_{0}|$, then a consistent set of its histories create a tree-like structure (Fig. 1). Further, the consistency condition implies that the tree branches are mutually orthogonal.

The consistent history framework does not consider non-locality in space or time as such \cite{RG7}, however, since the space of histories spans the complex vector space, we can consider
complex combinations of history vectors, i.e. any history can be represented as \cite{WC1}:
\begin{equation}
 |\Psi)=\sum_{i}\alpha_{i}|H^{i})
\end{equation}
where $\alpha_{i} \in \mathbb{C}$ and $\mathcal{F} \ni |H^{i})$ represents a consistent family of histories which is actually a complex extension of the consistent histories framework.

Having defined above, the histories space can be also equipped with the inner semi-definite product \cite{RG1} between any two histories $|\Psi)$ and $|\Phi)$:
\begin{equation}
(\Psi|\Phi)=Tr [K(|\Psi))^{\dagger}K(|\Phi))].
\end{equation}

It is fundamental to note that a history $|H^{\alpha})$ can be consistent or inconsistent (physically not realizable) basing on the associated evolution $\mathcal{B}$ of the system \cite{RG3} as its consistency is verified by means
of the aforementioned inner product engaging bridging operators. Thus, a temporal history is always associated with evolution and for completeness, there should be considered a pair consisting of a family of histories and the bridging operators  $\{\mathcal{F}, T\}$. Whenever we analyze features of a spatial pure quantum state, it is assumed that all necessary knowledge is hidden in the vector $|\psi\rangle$ so actually we analyze only one-element history objects $[\psi]=|\psi\rangle\langle\psi|$
from a perspective of a temporal local frame.

\begin{figure}[h]
  \centerline{\includegraphics[width=10cm]{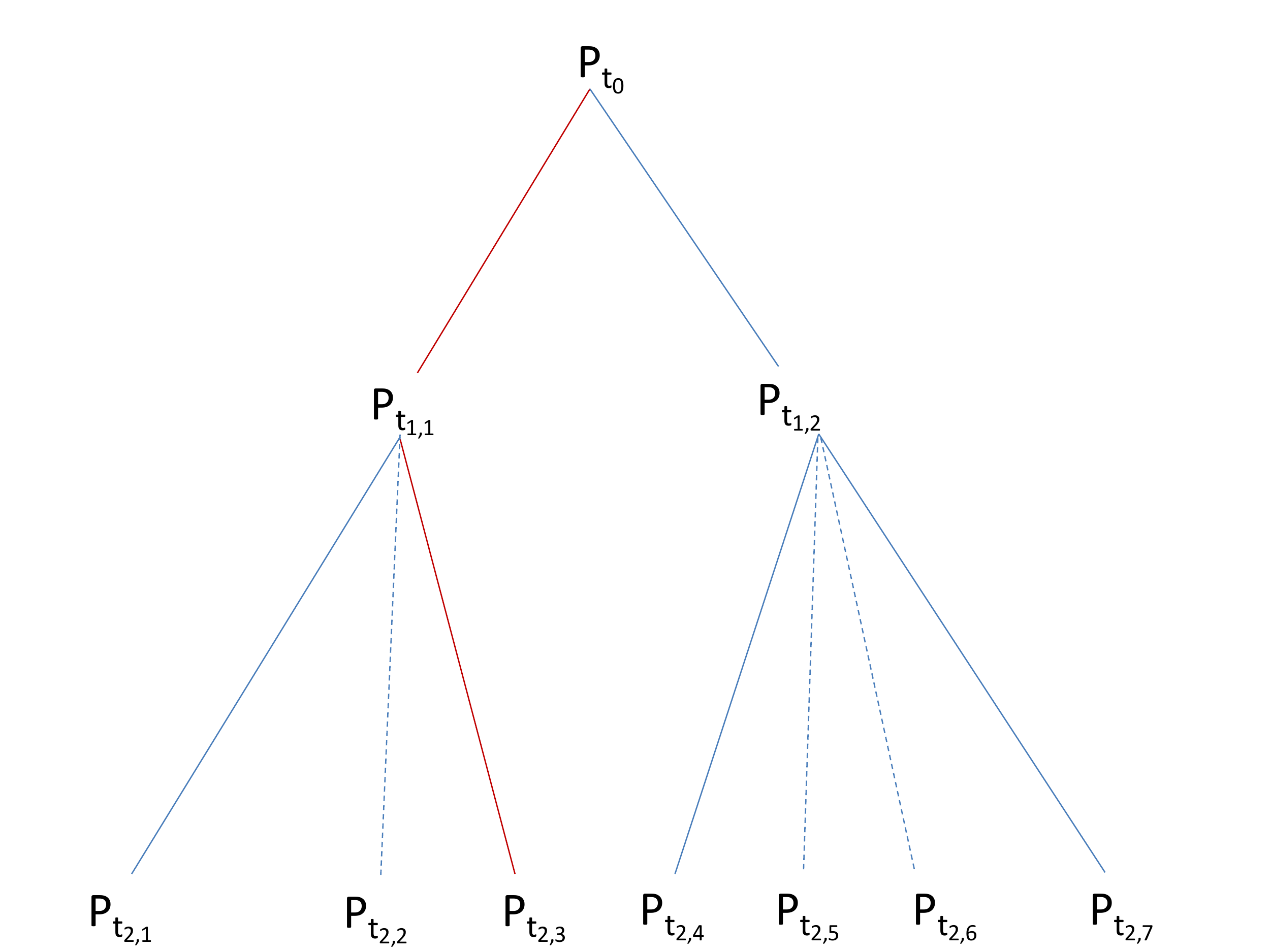}}
  \caption{If the observed evolution is initiated in a state $[P_{t_{0}}]=|\Psi_{0}\rangle\langle\Psi_{0}|$, then the history family can be represented as a tree-like structure where each branch represents a potential history. The branches are mutually orthogonal due to the consistency condition. The exemplary red branch represents history $|H)=P_{t_{2,3}}\odot P_{t_{1,1}}\odot P_{t_{0}}$.}
\end{figure}

\section{MONOGAMY OF QUANTUM ENTANGLEMENT IN SPACE}

Quantum entanglement is a phenomenon which does not have any reflection in classical world and as such is a manifestation of the so-called non-locality of quantum correlations. The roots of studies in this matter reach the year 1935 when the famous paper by Einstein, Podolsky and Rosen \cite{EPR} was published on nowadays' called (EPR) pairs, being in a maximally entangled state $|\Psi^{-}\rangle=\frac{1}{\sqrt{2}}(|01\rangle-|10\rangle)$ shared between two particles. In such a case none of the subsystems can have assigned a pure state.

In particular, many entangled states violate \textit{local realism} and as a consequence, Bell inequalities \cite{Bell}. Local realism has roots in classical world-view where for particular measurement of physical quantities, one believes that the measured physical quantities characterizing a physical object have a priori set values independent of the observers (realism) and for a bipartite setup, the measurement on one site does not influence the results of the other site's measurements (locality):

\textit{Realism.} The physical quantities being a subject of the measurements have definite real values which exist independent of the observation act.

\textit{Locality.} The results of measurements performed by Alice do no influence the results of measurements performed by Bob.

It is worth mentioning that the experiment is arranged in such a way that for two parties Alice and Bob, their experiments are causally disconnected. Thus, the measurement performed by Alice cannot influence the measurements done by Bob due to the light speed limit imposed by the special relativity theory.

To analyze correlations between results achieved in the experiment performed by Alice and Bob, imagine that they share a bipartite physical system consisting of two \textit{spatially} separated sub-systems that could interact in the past and which will be a subject of local measurements in distant laboratories belonging to Alice and Bob respectively (i.e.\textit{a distant lab paradigm}). Now, we can assign conditional probabilities to the measurement results $P(a,b|x,y)$ where $x$ and $y$ stand for measurement settings set locally by Alice and Bob respectively, and $a$ and $b$ for the measurement outcomes. Note that the measurement outcomes can be naturally inter-dependent, i.e. $P(a,b|x,y)\neq P(a|x)P(b|y)$ - the dependency can be created by \textit{a local hidden variable} $\lambda\in\Lambda$ that the experimenters are not aware of.
The hidden variables are a building block behind Bell inequalities and as such represent a hidden knowledge that cannot be possessed during the measurement process but can influence the measurement results and correlate them. The hidden variable is obviously also pre-defined in accordance with the local realism.

Since the local measurements' results are dependant only on x-settings and $\lambda$-variable for Alice and respectively on y-settings and $\lambda$-variable for Bob in local hidden variables (LHV) model, and moreover, we assume locality, then:
\begin{equation}
P(a,b|x,y,\lambda)=P(a|x,\lambda)P(b|y\lambda)
\end{equation}

For discrete distribution of $\lambda$ on $\Lambda$-space, after many measurement series we obtain (it reflects a random character of $\lambda$ in many measurements repeated on the system):
\begin{equation}
P(a,b|x,y)=\sum_{\lambda\in\Lambda}p(\lambda) P(a|x,\lambda)P(b|y,\lambda)
\end{equation}
For continuous distribution of $\lambda$ on $\Lambda$-space, we get a local hidden variable model:
\begin{equation}
P(a,b|x,y)=\int_{\Lambda} p(\lambda) P(a|x,\lambda)P(b|y,\lambda) d\lambda
\end{equation}

As a consequence of the local realism, every linear combination over such probabilities, meeting local realism conditions, builds the famous Bell inequalities for bipartite setup $\mathbf{B}(A,B)$ of an experiment performed by Alice and Bob. The Bell inequalities can be represented as a linear combination of conditional probabilities $P(a,b|x,y)$ (R is a local realistic bound - a real number):
\begin{equation}
\mathbf{B}(A,B)\equiv \sum_{xy}\sum_{ab}\alpha(a,b,x,y)P(a,b|x,y)\leq R
\end{equation}
and $\alpha(a,b,x,y)\geq 0$ parameters characterize the specific Bell inequality (since any Bell operator is a linear operator over conditional probabilities, one can always re-scale some initially negative $\alpha(a,b,x,y)$ - parameters so that in the re-scaled inequality $\alpha(a,b,x,y)\geq 0$).
These inequalities have to be met by all classical correlations with the aforementioned probability distributions $P(a,b|x,y)$ built for LHV models.

As observed, many entangled states violate the Bell inequalities, and e.g. the state $|\Psi_{-}\rangle=\frac{1}{\sqrt{2}}(|01\rangle-|10\rangle)$ - a maximally entangled state on $\mathbb{C}^{2}\otimes\mathbb{C}^{2}$ - violates the famous CHSH \cite{CHSHClauser} inequality maximally, saturating the Tsirelson bound \cite{Tsirelson}. In general, there exists a Bell inequality for any non-separable state which is violated by this state - this is an implication of the Hahn-Banach theorem for convex sets.

One of the fundamental questions related to quantum entanglement, as a resource shared between two parties Alice and Bob, is whether the correlations could be shared between more parties. The questions is fundamental not only due to applications in quantum computation or quantum cryptography but also due to the very nature of processing information between physical systems at different levels of complexity. It finds out that shareability of quantum correlations is bounded and it has its roots in $\textit{monogamy of quantum entanglement}$.

One can refer to a broadly used explanation \cite{Wootters} for spatial monogamy of entanglement
between parties ABC. It states that A cannot be simultaneously fully entangled with B and C since then $AB$ would be entangled with C having a mixed density matrix that contradicts purity of the singled state shared between A and B. It is expressed in Coffman-Kundu-Wootters (CKW) \cite{Wootters} monogamy inequality for three-qubit system in a state $\rho_{ABC}$:
\begin{equation}
C^2(\rho_{A|BC}) \leq C^2(\rho_{AB}) + C^2(\rho_{AC})
\end{equation}
where $C(\cdot)$ stands for the concurrence between the parties (e.g. $C^2(\rho_{A|BC})$ between A and BC subsystems). $C(\rho_{AB})$ is an entanglement monotone, and is defined as the averaged concurrence of an ensemble of pure states $\{p_{i}, |\Psi^{AB}_{i}\rangle\}$ corresponding to $\rho_{AB}$ minimized over all pure decompositions of $\rho_{AB}=\sum_{i}p_{i}|\Psi^{AB}_{i}\rangle\langle\Psi^{AB}_{i}|$ \cite{Wootters}:
\begin{equation}
C(\rho_{AB})=inf\sum_{i}p_{i}C(|\Psi^{AB}_{i}\rangle)
\end{equation}
and respectively for all other states. Concurrence of a pure state is  $C(|\Psi^{AB}\rangle)=\sqrt{2[1-Tr(\rho^{2}_{A})]}$ and $\rho_{A}=Tr_{B}|\Psi^{AB}\rangle\langle\Psi^{AB}|$.

If a bipartite state is in a state $\rho_{AB}=|\Psi^{-}\rangle\langle\Psi^{-}|$, then clearly the only possible tripartite extensions are of the form $\rho_{ABE}=\rho_{AB}\otimes\rho$, i.e. no symmetric extension of $|\Psi^{-}\rangle$ exists. That is also an immediate implication of the Schmidt decomposition for any purification of $\rho_{ABE}$ to a state $\Psi_{ABEE'}$ which has to be decomposed to a factorized state $\Psi_{ABEE'}=|\Psi^{-}\rangle\otimes|\Phi_{EE'}\rangle$ if for its reduction $AB$ one wants to get $\rho_{AB}=|\Psi^{-}\rangle\langle\Psi^{-}|$. Thus, we get at least two proofs of spatial monogamy of entanglement, one based on entanglement measures and one based on purely geometrical considerations.

%The concept of symmetric extendibility is directly related to monogamy of quantum entanglement and that phenomenon was a building block for initiation of broad %studies of symmetric extendibility applications. If a bipartite state is in a singlet state $\rho_{AB}=|\Psi^{+}\rangle\langle\Psi^{+}|$, then clearly the only %possible tripartite extensions are of the form $\rho_{ABE}=\rho_{AB}\otimes\rho$ as aforementioned and no symmetric extension of $|\Psi^{+}\rangle$ exists.

%\textit{Symmetric extendibility} \cite{D1, D2, T1, MN1} of a given bipartite state
%$\rho_{AB}\in \mathcal{B}(\mathcal{H}_A\otimes\mathcal{H}_B)$ (the Banach space of bounded operators) denotes that there
%exists a tripartite state $\rho_{ABE}\in
%\mathcal{B}(\mathcal{H}_A\otimes\mathcal{H}_B\otimes\mathcal{H}_B)$ invariant due to
%permutation of B and E part.
%\begin{defn} (Symmetric extendible extension)
%A state $\rho_{AB}\in \mathcal{B}(\mathcal{H}_A\otimes\mathcal{H}_B)$ is symmetrically extendible if there exists such a state $\rho_{ABE}\in
%\mathcal{B}(\mathcal{H}_A\otimes\mathcal{H}_B\otimes\mathcal{H}_B)$ ($\mathcal{H}_B=\mathcal{H}_E$) so that for permutation:
%\begin{equation}
%  P=\sum_{ijk}|ijk\rangle\langle ikj|
%\end{equation}
%there holds $P\rho_{ABE}P^{\dagger}=\rho_{ABE}$ and $Tr_{E}\rho_{ABE}=\rho_{AB}=\rho_{AE}$.
%\end{defn}

\section{QUANTUM ENTANGLEMENT IN TIME}
We consider in this section a concept of entanglement in time basing on the entangled consistent histories framework.

Since the algebra of histories with $\odot$ operation is a form of tensor algebra, it inherits the properties of a standard tensor algebra with $\otimes$ operation and all mathematical questions valid for vectors representing spatial correlations are mathematically valid for temporal correlations although not necessarily having similar physical interpretation \cite{MNPH5}.

Quantum entanglement for spatial correlations shared between two parties A and B, say Alice and Bob, denotes that the state $\rho_{AB}\in\mathcal{B}(\mathcal{H}_{A}\otimes\mathcal{H}_{B})$ of their bipartite system cannot be represented as a convex combination $\rho_{AB}=\sum_{i}p_{i}\rho_{A}^{i}\otimes\rho_{B}^{i}$ (which represents a separable state). The maximally entangled state of a bipartite system shared between Alice and Bob in space is represented as $|\Psi_{AB}\rangle=\frac{1}{\sqrt{d}}\sum_i |ii\rangle$. For the sake of spatial quantum entanglement, it is crucial to define the reductions of multipartite states and their extensions \cite{MNPH4}. To find a reduced state $\rho_{A}$ of a local state possessed e.g. by Alice, it is necessary to trace out Bob's system from the bipartite state $\rho_{AB}$ which is performed by the partial trace operation:
\begin{equation}
\rho_{A}=Tr_{B}\rho_{AB}=\sum_{i}\langle i_B|\rho_{AB}|i_B \rangle
\end{equation}
where the operation can be performed in a computational basis $|i_{B}\rangle$ of B-subsystem.

We will conduct further a similar reasoning for reductions of entangled histories, defining an operation of a partial trace over chosen times of a particular history state.

Now, it is substantial to note that any history $|Y)=F_{n}\odot\ldots\odot F_{0}$ can be extended to $I\odot Y$ as identity I represents a property that is always true and does not introduce additional
knowledge about the system.
Conversely, if one considers reduction of a history to smaller number of time frames, then information about the past and future of the reduced history is lost. Let us consider the potential history of the
physical system $|Y_{t_{n}\ldots t_{0}})=F_{n}\odot F_{n-1}\odot\ldots\odot F_{2}\odot F_{1}\odot F_{0}$ on times $\{t_{n}\ldots t_{0}\}$, then at time $t_{1}$ the reduced history is $|Y_{t_{1}})=F_{1}$. That was the trivial case of factorizable history, in analogy to factorizable quantum states in space, e.g. for $|\phi_{ABE}\rangle=|\phi_{A}\rangle\otimes|\phi_{B}\rangle\otimes|\phi_{E}\rangle$, the reduction over E results in $|\phi_{AB}\rangle=|\phi_{A}\rangle\otimes|\phi_{B}\rangle$. To find reductions over complex superpositions of histories, it is necessary to define a partial trace operator over a history.

%Further, looking at the history $Y_{t_{n}\ldots t_{0}}$ one can associate with %$Y_{t_{1}}$ two bridging operators $\mathcal{B}(t_{2},t_{1})^{\dag}$ and %$\mathcal{B}(t_{1},t_{0})$
%by means of which we can calculate two propagators taking the history from the %future and past events to $F_{1}$ where a system evolves through the potential %paths consistent with this history.

%....that means there are infinitely many potential probability amplitudes linked with the paths that lead from the future and the past of the analyzed history.

In analogy to partial trace for spatial quantum states, we introduced in \cite{MNPH5} a partial trace operation on a history state in accordance with general rules of calculating partial traces on tensor algebras:\\
\begin{defn}
For a history $|Y_{t_{n}\ldots t_{0}})$ acting on a space $\mathcal{H}=\mathcal{H}_{t_{n}}\otimes\dots\otimes\mathcal{H}_{t_{0}}$, a partial trace over times $\{t_{j}\ldots t_{i+1} t_{i}\}$ $(j\geq i)$ is:
\begin{equation}\label{PartialTrace}
Tr_{t_{j}\ldots t_{i+1}t_{i}} |Y_{t_{n}\ldots t_{0}})(Y_{t_{n}\ldots t_{0}}|=\sum_{k=1}^{\dim\mathcal{F}} (e_{k}|Y_{t_{n}\ldots t_{0}})(Y_{t_{n}\ldots t_{0}}|e_{k})
\end{equation}
where $\mathcal{F}=\{|e_{k})\}$ creates an orthonormal consistent family of histories on times $\{t_{j}\ldots t_{i+1} t_{i}\}$ and the strong consistency condition for partial histories holds for base histories, i.e. $(e_{i}|e_{j})=Tr[K(|e_{i}))^{\dag}K(|e_{j}))]=\delta_{ij}$.
\end{defn}

%\begin{figure}
%\includegraphics[width=7cm, height=5cm]{CHistories.pdf}
% \caption[Fig1.]{If the observed evolution is initiated in a state $[P_{t_{0}}]=|\Psi_{0}\rangle\langle\Psi_{0}|$, then the history family can be represented as a %tree-like structure where each brach represents a potential history. The branches are mutually orthogonal due to the consistency condition. The exemplary red %branch represents history $H=P_{t_{2,3}}\odot P_{t_{1,1}}\odot P_{t_{0}}$.}
% \label{Fig 1}
%\end{figure}

We propose further a general form of a maximally entangled history, in similarity to the maximally entangled state of a bipartite system in space, $|\Psi_{+}\rangle=\frac{1}{\sqrt{N}}\sum_{i=1}^{N}|i\rangle\otimes|i\rangle,\; 2\leq N<\infty$:\\
\begin{prop}
A history state 'maximally entangled' in time is represented by:
\begin{equation}\label{maxent}
|\Psi)=\frac{1}{\sqrt{N}}\sum_{i=1}^{N}|e_{i})\odot|e_{i}),\; 2\leq N<\infty
\end{equation}
with a trivial bridging operator $I$ and  $\{|e_{i})\}$ creating an orthonormal consistent histories family.
\end{prop}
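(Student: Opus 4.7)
The plan is to mimic the spatial argument that the canonical Bell-type vector $|\Psi_+\rangle = \frac{1}{\sqrt{N}}\sum_i |i\rangle\otimes|i\rangle$ has maximally mixed reductions, transposing each step to the temporal tensor algebra using the partial trace from the preceding Definition. Concretely, I take "maximally entangled" to mean: the state is normalized with respect to the inner product $(\cdot|\cdot)$ induced by the trivial bridging operator, and its partial trace over either of the two time slots produces the maximally mixed reduced history $\tfrac{1}{N}\sum_i |e_i)(e_i|$, which equals $I/N$ on the consistent family $\mathcal{F}=\{|e_i)\}$.

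First I would verify normalization. With $\mathcal{B}=I$ the chain operator on a basis history is $K(|e_i)\odot|e_j))=|e_j)(e_i| \cdot e_i$-style product that collapses, via $(e_i|e_j)=\delta_{ij}$, to a simple delta on the paired indices. Using the inner product $(\Psi|\Psi)=\mathrm{Tr}[K(|\Psi))^\dagger K(|\Psi))]$ and the orthonormality of $\{|e_i)\}$ as a consistent family, cross terms $(e_i\odot e_i | e_j\odot e_j)$ vanish for $i\neq j$ and the diagonal contributions sum to $\tfrac{1}{N}\cdot N = 1$. The same orthonormality also certifies that $|\Psi)$ itself belongs to (a complex extension of) a consistent family, so the inner product is well defined on it.

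Next, I would compute $\mathrm{Tr}_{t_1}|\Psi)(\Psi|$ directly from the formula
\begin{equation}
\mathrm{Tr}_{t_1}|\Psi)(\Psi| = \sum_{k=1}^{N}(e_k|\Psi)(\Psi|e_k),
\end{equation}
expand $|\Psi)=\tfrac{1}{\sqrt{N}}\sum_i |e_i)\odot|e_i)$, and pull the inner product through the $\odot$ on the traced-out slot. Each factor $(e_k|e_i)=\delta_{ki}$ collapses the double sum to
\begin{equation}
\mathrm{Tr}_{t_1}|\Psi)(\Psi| = \frac{1}{N}\sum_{i=1}^{N}|e_i)(e_i|,
\end{equation}
which is precisely the temporal analogue of $I/N$, i.e.\ the maximally mixed reduced history on the remaining time slot. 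By symmetry of $|\Psi)$ under swap of the two tensor factors, tracing out the other time yields the same result, completing the characterization.

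The one genuinely delicate step, and the main obstacle I would anticipate, is justifying that partial trace on the history state interacts correctly with the trivial bridging operator $\mathcal{B}=I$: the Definition assumes the base family $\{|e_k)\}$ is strongly consistent, i.e.\ $(e_i|e_j)=\mathrm{Tr}[K(|e_i))^\dagger K(|e_j))]=\delta_{ij}$, and this consistency is computed with the bridging operators of the global history. With $\mathcal{B}=I$ the chain operator degenerates to ordinary products of projectors, so consistency of $\{|e_i)\}$ at a single time reduces to mutual orthogonality of the projectors $|e_i)$, which is the standing assumption. One should state this reduction explicitly and check that no hidden issue with the coarse-graining or the weight $W(|\Psi))=1$ spoils the analogy; once this is in place the rest is a direct transcription of the spatial proof.
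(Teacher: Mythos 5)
Your argument is essentially correct, but be aware that the paper itself offers no proof of this proposition: the state is \emph{proposed} by analogy with the spatial vector $|\Psi_{+}\rangle=\frac{1}{\sqrt{N}}\sum_{i}|i\rangle\otimes|i\rangle$, and the only justification supplied is the single remark following the statement, namely that the identity bridging operator must be attached to $|\Psi)$ because a nontrivial $\mathcal{B}$ could render the history intrinsically inconsistent, i.e.\ dynamically impossible. In effect the paper treats ``maximally entangled in time'' as being \emph{defined} by the displayed formula, whereas you first formalize an operational criterion (normalization with respect to $(\cdot|\cdot)$ together with maximally mixed reductions $\frac{1}{N}\sum_{i}|e_{i})(e_{i}|$ under the temporal partial trace of the preceding Definition) and then verify it by the standard Schmidt-type computation. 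That is a genuinely different and more informative route: the paper buys brevity by fiat, while your version explains why the formula deserves the name and connects it to the partial-trace machinery that the paper later relies on for the monogamy argument. Your closing paragraph also correctly isolates the one point the paper does make explicitly --- that everything hinges on $\mathcal{B}=I$, under which the chain operator degenerates to products of projectors. One caution: the cross-term cancellation $(e_{i}\odot e_{i}|e_{j}\odot e_{j})=\delta_{ij}$ does not follow formally from $(e_{i}|e_{j})=\delta_{ij}$ alone, since $\mathrm{Tr}\bigl[K_{i}^{\dagger}K_{i}^{\dagger}K_{j}K_{j}\bigr]$ is not in general controlled by $\mathrm{Tr}\bigl[K_{i}^{\dagger}K_{j}\bigr]$; it does hold when the $|e_{i})$ are built from mutually orthogonal projectors (the strong-consistency hypothesis you invoke), so you should state that dependence explicitly rather than leave it as a ``collapse.''
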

It is important to note that one can always employ such a bridging operator that $|\Psi)$ could become intrinsically inconsistent which means it would be dynamically impossible \cite{RG3}, thus, an identity bridging operator is associated with the above state.
%Further, one could also introduce $\tau GHZ$ and $\tau W$ states substantial for studies of multipartite correlations and their applications (e.g. for secret key %generation, quantum algorithms or spin networks) in analogy to spatial $|GHZ\rangle$ and $|W\rangle$ states with trivial bridging operators:
%\begin{equation}
% \left\lbrace
%           \begin{array}{l}
%|\tau GHZ)=  \frac{1}{\sqrt{2}}( |e_{0})^{\odot N}+|e_{1})^{\odot N})\\
%|\tau W)= \frac{1}{\sqrt{N}}( |e_{1})\odot|e_{0})\odot\cdots \odot|e_{0}) +\\|e_{0})\odot|e_{1})\odot\cdots \odot|e_{0})+\cdots+|e_{0})\odot|e_{0})\odot\cdots %\odot|e_{1}))\\
%\end{array}
%         \right.
%\end{equation}

%\section{MONOGAMY OF QUANTUM ENTANGLEMENT IN TIME}

It is worth mentioning that recently \cite{WC1, WC2} the concept of Bell-like tests have been proposed for experimental analysis of entangled histories.
We further consider the Mach-Zehnder interferometer (Fig. 2,  $H=\frac{1}{\sqrt{2}} \left[
  \begin{array}{cc}
  1 & 1 \\
  1 & -1 \\
\end{array}\right]$) to discuss the matter of monogamy of quantum entanglement in time \cite{MNPH5}.

In the following let us consider an intrinsically consistent history on times $\{t_{3}, t_{2}, t_{1}, t_{0}\}$:
\begin{equation}
|\Lambda)=\alpha([\phi_{3,1}]\odot I_{t_{2}}\odot [\phi_{1,1}]+[\phi_{3,2}]\odot I_{t_{2}}\odot [\phi_{1,2}])\odot [\phi_{0}]
\end{equation}
where $\alpha$ stands for the normalization factor, $[\phi_{i,j}]=|\phi_{i,j}\rangle\langle\phi_{i,j}|$ and potentiality of the history means that one can construct a history observable $\widehat{\Lambda}=|\Lambda)(\Lambda|$.
Now, after tracing out the time $t_2$, one gets the reduced history on times $t_{1}$ and $t_{3}$:
\begin{equation}
|\Lambda_{1})=\tilde{\alpha}([\phi_{3,1}]\odot [\phi_{1,1}]+[\phi_{3,2}]\odot [\phi_{1,2}])
\end{equation}
which displays entanglement in time apparently.
Noticeably, we have to show that to be in agreement with the partial trace definition and Feynman propagators' formalism \cite{Feynman}, the history $|\Lambda_{1})$ cannot be extracted from
the following $|\tau GHZ)$-like state $|\Psi)$, i.e. $|\Lambda_{1})(\Lambda_{1}|\neq Tr_{t_2}|\Psi)(\Psi|$ \cite{MNPH5}.

We stress that the history state $|\Psi)$ is also allowed in the setup of the aforementioned interferometer (Fig. 2) as a potential history:
\begin{equation}
|\Psi)=\gamma([\phi_{3,1}]\odot [\phi_{2,1}]\odot [\phi_{1,1}]+[\phi_{3,2}]\odot [\phi_{2,2}]\odot [\phi_{1,2}])
\end{equation}

We observe that the reduced history $[\phi_{3,1}]\odot [\phi_{1,1}]$ is correlated with $[\phi_{2,1}]$ and not with $[\phi_{2,2}]$. Thus, we cannot simply add the histories $[\phi_{3,1}]\odot [\phi_{1,1}]+[\phi_{3,2}]\odot [\phi_{1,2}]$ as a reduction of $|\Psi)$ over time $t_2$.
It would imply decorrelation with the next instance of the history in such a case, i.e. it could be always expanded to a history e.g. $[\phi_{t_{x}}]\odot([\phi_{3,1}]\odot [\phi_{1,1}]+[\phi_{3,2}]\odot [\phi_{1,2}])$. This result is
in agreement with the Feynman's addition rule for probability amplitudes since this scenario would mean e.g. existence of detectors in the consecutive step performing measurements of the light states.

Imagine that for a maximally entangled history (\ref{maxent}) $\rho_{t_{1}t_{2}}=|\Psi)(\Psi|$ on times $\{t_1, t_2\}$ there exists a purification to a history state $|H_{t_{1}t_{2}t_{3}t_{4}})$ then in accordance with the partial trace definition (\ref{PartialTrace}), the maximally entangled history would have to be a reduction of $|H_{t_{1}t_{2}t_{3}t_{4}})(H_{t_{1}t_{2}t_{3}t_{4}}|$,  i.e. $|\Psi)(\Psi|=\sum_{i}(e_{i}|\odot I_{t_{1}t_{2}} |H_{t_{1}t_{2}t_{3}t_{4}})(H_{t_{1}t_{2}t_{3}t_{4}}| I_{t_{1}t_{2}} \odot |e_i)$ for some consistent history family $\mathcal{F}=\{|e_i)\}$ on times $\{t_{3},t_{4}\}$ but due to the consistency condition, one gets $|\Psi)$ only if $|H_{t_{1}t_{2}t_{3}t_{4}})=|H_{t_{3}t_{4}})\odot |\Psi)$ (for some history $|H_{t_{3}t_{4}})=\sum_{i}\gamma_{i}|e_{i})$, where $\gamma_{i}$ are complex numbers), otherwise the reduction would be a mixture of some consistent histories from the family. One can also observe immediately that $|\Psi)(\Psi|=\sum_{i}(e_{i}|\odot I_{t_{1}t_{2}} |H_{t_{1}t_{2}t_{3}t_{4}})(H_{t_{1}t_{2}t_{3}t_{4}}| I_{t_{1}t_{2}} \odot |e_i)$ implies that for any family base vector $|e_{i})$,  $|\Psi)=\gamma_i (e_{i}|\odot I_{t_{1}t_{2}} |H_{t_{1}t_{2}t_{3}t_{4}})$ (for some complex amplitude $\gamma_i$) and $(e_i|\odot(\Psi|\Psi')\odot |e_i)=0$ for any orthogonal $|\Psi)$ and $|\Psi')$. Thus, $|H_{t_{1}t_{2}t_{3}t_{4}})=\sum_i \gamma_i|e_i) \odot |\Psi)$.

It is important
to note that these considerations are related to $|\Psi)(\Psi|$ - observable and the particular history $|\Psi)$. Yet, other histories in the Mach-Zehnder interferometer are also accessible. It shows clearly a physical sense
of quantum entanglement in time and further a concept of its monogamy for a particular entangled history.

Therefore, basing on the above observations, we find temporal monogamy phenomenon for a particular entangled history of similar nature to the spatial monogamy of quantum states \cite{Wootters}. On the ground of consistent histories approach, it implies that we cannot build a tripartite (i.e. defined on three different times) history state $\rho_{t_{3}t_{2}t_{1}}$ where
$\rho_{t_{3}t_{2}}=\rho_{t_{2}t_{1}}=|\Psi )( \Psi|$ and $Tr_{t_{1}}\rho_{t_{3}t_{2}t_{1}}=\rho_{t_{3}t_{2}}$.

\begin{figure}[h]
\centerline{\includegraphics[width=10cm]{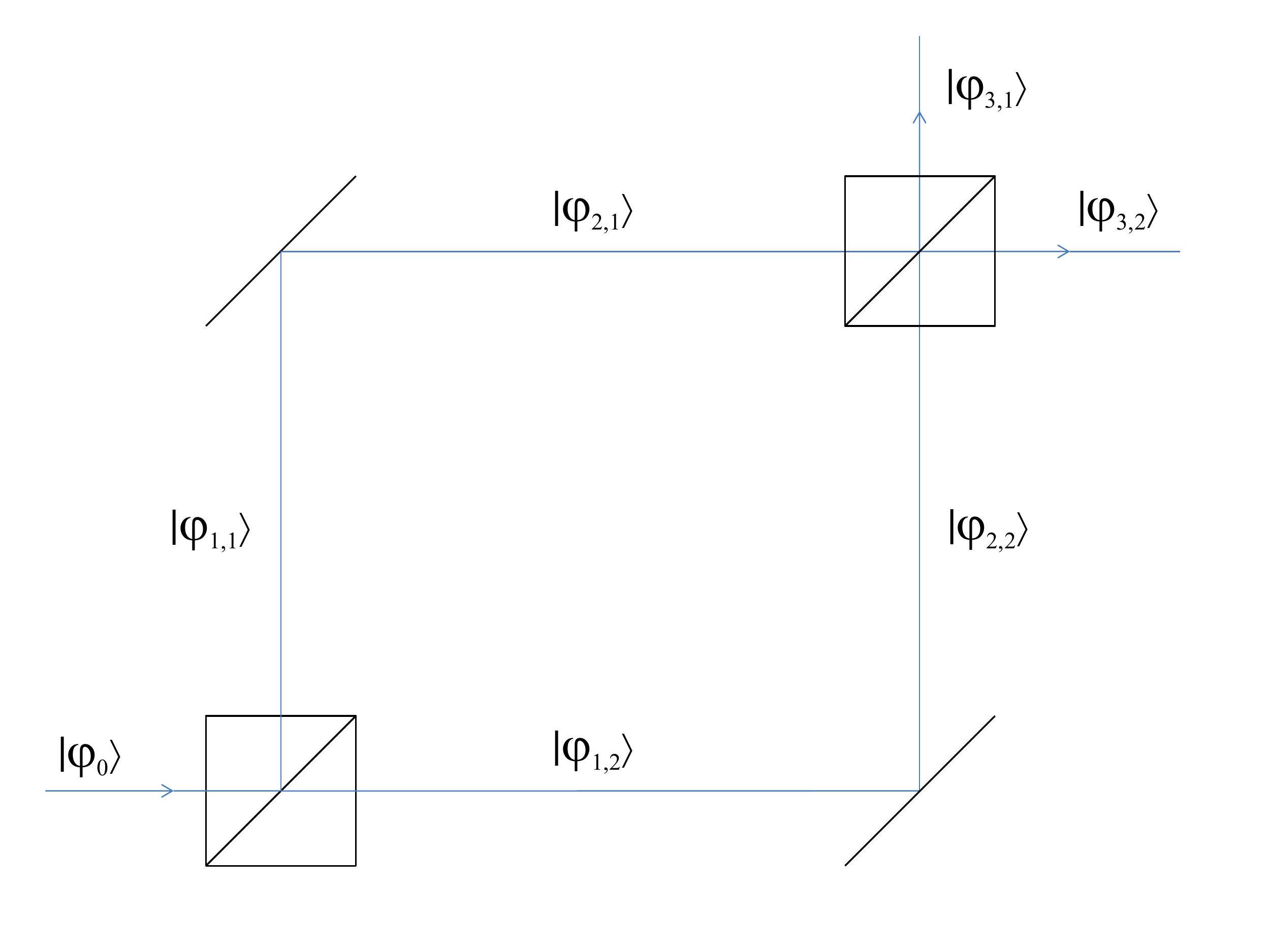}}
 \caption{The Mach-Zehnder interferometer with an input state $|\phi_{0}\rangle$ - a vacuum state is omitted which does not change further considerations. The beam-splitters can be represented by Hadamard operation
 acting on the spatial modes. One can analyze the interferometer via four-times histories on times $t_{0}<t_{1}<t_{2}<t_{3}$ for the interferometer process: $|\phi_{0}\rangle \rightarrow \frac{1}{\sqrt{2}}(|\phi_{1,1}\rangle+|\phi_{1,2}\rangle)\rightarrow \frac{1}{\sqrt{2}}(|\phi_{2,1}\rangle+|\phi_{2,2}\rangle)\rightarrow |\phi_{3,2}\rangle$.}
\end{figure}

Besides the aforementioned reasoning derived from Feynman's quantum paths, one can refer to a broadly used explanation \cite{Wootters} for spatial monogamy of entanglement
between parties ABC (or further $\{t_{3}, t_{2}, t_{1}\}$ for temporal correlations). As mentioned in previous section, it states that A cannot be simultaneously fully entangled with B and C since then AB would be entangled with C having a mixed density
matrix that contradicts purity of the maximally entangled state shared between A and B.

For the history spaces one can build naturally $\mathcal{C}^{*}$-Algebra of history operators equipped with a partial trace operation (\ref{PartialTrace}) and follow the same reasoning for entangled histories.
We can summary these considerations with the following corollary about monogamy of temporal entangled histories \cite{MNPH5}:\\
\\
{\bf Corollary 1.}\\
\textit{
There does not exist  any such a history $|H)\in Proj(\mathcal{H}^{\otimes n})$ so that for three chosen times $\{t_{3},t_{2},t_{1}\}$ one can find reduced histories $|\Psi_{t_{3}t_{2}})=\frac{1}{\sqrt{2}}(|e_{0})\odot|e_{0})+|e_{1})\odot|e_{1}))$ and $|\Psi_{t_{2}t_{1}})=\frac{1}{\sqrt{2}}(|e_{0})\odot|e_{0})+|e_{1})\odot|e_{1}))$.
}
\\

%It is fundamental to note that both histories $|\Psi_{t_{3}t_{2}})$ and $|\Psi_{t_{2}t_{1}})$ inherits the bridging operators %$\{\mathcal{B}(t_{1},t_{2}),\mathcal{B}(t_{2},t_{3})\}$ from the history $|H)$.
%To be in agreement with consistent histories condition one can always extend the history $|\Psi_{t_{3}t_{2}})$ to $|H^{\alpha})=|\Psi_{t_{3}t_{2}})\odot id_{t_{1}}$ and $|\Psi_{t_{2}t_{1}})$ to
%$|H^{\beta})=id_{t_{3}}\odot |\Psi_{t_{2}t_{1}})$ keeping the same bridging operators for both histories as defined for $|H)$, $id_{t_{x}}$ states for identity operator representing actually no
%knowledge about the given instance of time.

This lemma holds for any finite dimension $n$ and also for general entangled states of the form (\ref{maxent}).

%As a consequence, there does not exist such a temporal observable %$\widehat{\Lambda}_{A_{1}A_{2}A_{3}}$ so that $A_{1}A_{2}$ parties are maximally %entangled and $A_{2}A_{3}$ are maximally entangled simultaneously on times %$\{t_{3}, t_{2}, t_{1}\}$. However, in principle there exist
%observables of different histories that do not commute and cannot be observed at %the same reference frame by an observer that are maximally entangled between %$A_{1}A_{2}$ and $A_{2}A_{3}$ \cite{MNPH3}.
%(That would be an explanation for Legett-Garg temporal inequlities)

%Sharing spatial and temporal correlations? Degrees of freedom??
%What about correlations shared between spatial and temporal objects in the same frame of reference F? Does it mean that if we take a spatial pair $\Psi_{AB}$, %then A-party cannot be quantum correlated with other instances of time?

\section{BOUNDING TEMPORAL CORRELATIONS}

For many years there has been studied the violation of local realism (LR) \cite{Bell} and macrorealism (MR) \cite{MRealism} in relation to quantum theories in experimental setups where measurement outputs are tested against violation of Bell inequalities for LR and Leggett-Garg inequalities (LGI) \cite{LGI} for MR. For quantum theories, the former raises as a consequence of non-classical correlations in space while the latter as a consequence of non-classicality of dynamic
evolution. Macrorealism consists of the following assumptions about the reality:

\textit{Macrorealism. } A physical object is at any 'given' time at a definite quantum state.

\textit{Noninvasive measurability.} It is possible to determine the state of the object without any effect on the state and the subsequent evolution.

\textit{Induction.} The properties of an ensemble of quantum states are determined by the initial conditions exclusively (and not by the final conditions.)

In this section we recall the result \cite{MNPH5} that entangled histories approach gives the same well-known Tsirelson bound \cite{Tsirelson} on quantum correlations for LGI as quantum entangled states in case of bi-partite spatial correlations for CHSH-inequalities which saturates the inequalities by quantum mechanical probability distributions.

We take a temporal version of CHSH inequality which is a modification of Legett-Garg inequalities. Then Alice performs a measurement at time $t_{1}$, choosing between two dichotomic observables $\{A_{1}^{(1)}, A_{2}^{(1)}\}$. Bob performs a measurement at time $t_{2}$ choosing between $\{B_{1}^{(2)}, B_{2}^{(2)}\}$.

Then, for such a scenario the Leggett-Garg inequality can be represented in the following form \cite{Vedral}:
\begin{equation}
S_{LGI}\equiv c_{12}+c_{21}+c_{11}-c_{22} \leq 2
\end{equation}
where $c_{ij}=\langle A_{i}^{(1)}, B_{j}^{(2)}  \rangle$ stands for the expectation value of consecutive measurements performed at time $t_{1}$ and $t_{2}$.

Since history operators build a $\mathcal{C}^{*}$-Algebra for normalized histories from projective Hilbert spaces equipped with a well-defined inner product, one can provide reasoning about bounding the LGI purely on the space of entangled histories, and achieve the quantum bound $2\sqrt 2$ of CHSH-inequality specific for spatial correlations.

The importance of this analytical result is due to the fact that
previously it was derived basing on convex optimization methods by means of semi-definite programming \cite{Budroni} and by means of correlator spaces \cite{Fritz} (related to probability conditional distributions
of consecutive events).

We will now recall the theory by B.S. Cirel'son about bounds on Bell's inequalities that is broadly used for finding quantum bounds on spatial Bell-inequalities:
\begin{thm}\cite{Tsirelson}\\
$1.$ There exists $\mathcal{C}^{*}$-Algebra $\mathcal{A}$ with identity, Hermitian operators $A_{1},\ldots, A_{m}, B_{1},\ldots, B_{n} \in \mathcal{A}$ and a state $f$ on $\mathcal{A}$ so that for every ${k,l}$:\\
 \begin{equation}
 A_{k}B_{l}=B_{l}A_{k}; \; \mathbb{I}\leq A_{k}\leq \mathbb{I}; \; \mathbb{I}\leq B_{l}\leq \mathbb{I}; \; f(A_{k}B_{l})=c_{kl}.
\end{equation}
$2. $ There exists a density matrix $W$ such that for every $k,l$:
 \begin{equation}
Tr(A_{k}B_{l}W)=c_{kl} \; and \; A_{k}^{2}=\mathbb{I};\; B_{l}^{2}=\mathbb{I}.
\end{equation}
$3. $ There are unit vectors $x_{1}, \ldots, x_{m}, y_{1},\ldots, y_{n}$ in a $(m+n)$-dimensional Euclidean space such that:
\begin{equation}
\langle x_{k},y_{l} \rangle = c_{kl}.
\end{equation}
\end{thm}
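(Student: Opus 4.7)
The plan is to prove this as an equivalence of the three conditions (as stated in Tsirelson's original 1980 formulation), by closing the loop of implications $(2)\Rightarrow(1)\Rightarrow(3)\Rightarrow(2)$. Each step uses a standard construction, but the last one is where the nontrivial quantum representation enters.

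The implication $(2)\Rightarrow(1)$ is essentially bookkeeping: take $\mathcal{A}$ to be the C*-algebra generated by $\{A_k\otimes\mathbb{I}\}_k\cup\{\mathbb{I}\otimes B_l\}_l$ on the Hilbert space underlying $W$, define $f(T)=\mathrm{Tr}(TW)$, and note that commutativity $A_k B_l=B_l A_k$ follows automatically from the tensor structure while $A_k^2=B_l^2=\mathbb{I}$ implies $-\mathbb{I}\le A_k,B_l\le \mathbb{I}$.

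For $(1)\Rightarrow(3)$ I would invoke the GNS construction applied to the state $f$ on $\mathcal{A}$, producing a representation $\pi$ on a Hilbert space $H_f$ with cyclic vector $\xi$ satisfying $f(a)=\langle\xi,\pi(a)\xi\rangle$. Setting $u_k=\pi(A_k)\xi$ and $v_l=\pi(B_l)\xi$, the commutation of $A_k$ with $B_l$ gives $\langle u_k,v_l\rangle=f(A_k B_l)=c_{kl}$. Since $\|A_k\|,\|B_l\|\le 1$, the vectors have norm $\le 1$; to promote them to unit vectors I would embed $H_f$ into $H_f\oplus\mathbb{C}^{m+n}$ and append orthogonal correction terms $\sqrt{1-\|u_k\|^2}\,e_k$ and $\sqrt{1-\|v_l\|^2}\,e'_l$ along fresh orthonormal directions, so that norms become $1$ while cross inner products are preserved. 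Finally, projecting onto the $(m+n)$-dimensional real subspace spanned by the resulting vectors (using that Gram matrices of real inner products have rank at most $m+n$) gives the required configuration in $\mathbb{R}^{m+n}$.

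The main obstacle is $(3)\Rightarrow(2)$: reconstructing a genuine quantum model from a purely geometric Gram-matrix datum. The key tool is a Clifford/Jordan--Wigner representation. Given unit vectors $x_1,\dots,x_m,y_1,\dots,y_n\in\mathbb{R}^{m+n}$, pick self-adjoint operators $\gamma_1,\dots,\gamma_{m+n}$ on some auxiliary space $\mathcal{K}$ satisfying the Clifford relations $\{\gamma_i,\gamma_j\}=2\delta_{ij}\mathbb{I}$, and define
\begin{equation}
A_k=\sum_{i=1}^{m+n}(x_k)_i\,\gamma_i\otimes\mathbb{I},\qquad B_l=\mathbb{I}\otimes\sum_{j=1}^{m+n}(y_l)_j\,\gamma_j
\end{equation}
on $\mathcal{K}\otimes\mathcal{K}$. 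Commutativity is automatic, and the Clifford relations together with $\|x_k\|=\|y_l\|=1$ give $A_k^2=B_l^2=\mathbb{I}$. For the state, I would take $W=|\Phi\rangle\langle\Phi|$ with $|\Phi\rangle=d^{-1/2}\sum_i|i\rangle|i\rangle$ the maximally entangled vector on $\mathcal{K}\otimes\mathcal{K}$; using the identity $\langle\Phi|(X\otimes Y)|\Phi\rangle=d^{-1}\mathrm{Tr}(XY^{T})$ and a suitable choice of $\gamma_i$ (real symmetric representatives, or equivalently $\beta_l=\sum_j (y_l)_j\gamma_j^T$) one obtains $\mathrm{Tr}(A_k B_l W)=\sum_{i,j}(x_k)_i(y_l)_j\,d^{-1}\mathrm{Tr}(\gamma_i\gamma_j)=\langle x_k,y_l\rangle=c_{kl}$. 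The delicate point, and what I expect to need the most care, is choosing the Clifford representation so that the trace identity $d^{-1}\mathrm{Tr}(\gamma_i\gamma_j)=\delta_{ij}$ holds on the nose and the transpose does not introduce spurious signs; this is where the standard irreducible complex Clifford representation of dimension $d=2^{\lceil(m+n)/2\rceil}$ is invoked. Closing the cycle then gives all three implications and hence equivalence.
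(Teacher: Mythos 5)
The paper does not prove this statement at all: it is imported verbatim by citation from Cirel'son's 1980 paper and used as a black box, so there is no internal proof to compare against. Your reconstruction is essentially Tsirelson's original argument, correctly read as an equivalence of the three conditions (the paper's phrasing omits the ``the following are equivalent'' preamble, and also has a typo where $-\mathbb{I}\leq A_{k}$ is printed as $\mathbb{I}\leq A_{k}$; your version restores both). The three implications are sound: $(2)\Rightarrow(1)$ is indeed bookkeeping; in $(1)\Rightarrow(3)$ the GNS step, the norm bound $\|\pi(A_{k})\xi\|^{2}=f(A_{k}^{2})\leq 1$, and the padding trick with fresh orthogonal directions are all correct (note that commutativity is what makes $f(A_{k}B_{l})$ real, via $\overline{f(A_{k}B_{l})}=f(B_{l}A_{k})$); and $(3)\Rightarrow(2)$ via Clifford generators and the maximally entangled state, with $B_{l}$ built from the transposed generators so that $\langle\Phi|(X\otimes Y)|\Phi\rangle=d^{-1}\mathrm{Tr}(XY^{T})$ lands on $d^{-1}\mathrm{Tr}(\gamma_{i}\gamma_{j})=\delta_{ij}$, is the standard construction and works as you describe. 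The one step I would tighten is the final move in $(1)\Rightarrow(3)$: ``projecting onto the $(m+n)$-dimensional real subspace spanned by the resulting vectors'' is not quite the right operation, since the vectors live in a complex Hilbert space and a projection would not preserve norms. The clean argument is to form the $(m+n)\times(m+n)$ Gram matrix $G$ of the padded vectors, observe that $\mathrm{Re}(G)=\tfrac{1}{2}(G+G^{T})$ is real, symmetric, positive semidefinite, has unit diagonal, and has the required off-diagonal blocks $c_{kl}$, and then factorize it as the Gram matrix of real unit vectors in $\mathbb{R}^{m+n}$. With that substitution the proof is complete.
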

For a temporal setup one considers measurements $\mathbb{A}=I \odot \mathbb{A}^{(1)}$ (measurement $\mathbb{A}$ occurring at time $t_{1}$) and $\mathbb{B}=\mathbb{B}^{(2)}\odot I$, which will in an exact analogy to the proof of the above theorem for a spatial setup \cite{MNPH5}.

The history with 'injected' measurements could be represented as $|\widetilde{H})=\alpha \mathbb{A}\mathbb{B}|H)\mathbb{A}^{\dagger}\mathbb{B}^{\dagger}$ where $\alpha$ stands for a normalization factor.
The history observables are history state operators which are Hermitian and their eigenvectors can generate a consistent history family\cite{WC1}.

For an exemplary observable $A=\sum_{i}a_{i}|H_{i})(H_{i}|$, its measurement on a history $|H)$ generates an expectation value $\langle A\rangle=Tr(A|H)(H|)$ (i.e. the result $a_{i}$ is achieved with probability $|(H|H_{i})|^{2}$) in analogy to the spatial case.

Thus, one achieves a history $|\widetilde{H})$ as a realized
history with measurements and the expectation value of the history observable $\langle A \rangle$.

It is noticeable that $|\widetilde{H})$ and $|H)$ are both compatible histories, i.e. related by a linear transformation. Thus basing on these results, we can state the following lemma:
\begin{lem}
For any history density matrix $W$ and Hermitian history dichotomic observables $A_{i}=I\odot A_{i}^{(1)}$ and $B_{j}= B_{j}^{(2)} \odot I$ where $i,j \in \{1,2\}$ the following bound holds:
\begin{eqnarray}
S_{LGI}&=&c_{11}+c_{12}+c_{21}-c_{22}\\
&=&Tr((A_{1}B_{1}+A_{1}B_{2}+A_{2}B_{1}-A_{2}B_{2})W)\nonumber \\
&\leq& 2\sqrt{2}
\nonumber
\end{eqnarray}
\end{lem}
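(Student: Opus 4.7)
The plan is to transplant Cirel'son's original $\mathcal{C}^{*}$-algebraic proof of the CHSH bound into the history algebra set up earlier in the excerpt. The first step is to verify that the three hypotheses of the cited Cirel'son theorem (Theorem 5.1) hold for the temporal observables $A_i = I \odot A_i^{(1)}$ and $B_j = B_j^{(2)} \odot I$. Hermiticity is inherited from $A_i^{(1)}, B_j^{(2)}$. Dichotomy of the underlying measurements gives $A_i^2 = B_j^2 = I$, and in particular $\|A_i\|, \|B_j\| \leq 1$. The crucial input $[A_i, B_j]=0$ should follow from the temporal tensor structure: $A_i$ acts nontrivially only on the $t_1$ factor of the history Hilbert space while $B_j$ acts only on the $t_2$ factor, so their product factorizes as $B_j^{(2)} \odot A_i^{(1)}$ independently of ordering.

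With these three facts in hand, I would define the temporal CHSH history operator
\[
\mathcal{S} = A_1 B_1 + A_1 B_2 + A_2 B_1 - A_2 B_2 = A_1(B_1+B_2) + A_2(B_1-B_2)
\]
and compute $\mathcal{S}^2$. A direct expansion, using only $A_i^2 = B_j^2 = I$ and $[A_i, B_j]=0$, collapses all cross terms to the classical Cirel'son identity
\[
\mathcal{S}^2 = 4\,I - [A_1, A_2]\,[B_1, B_2].
\]
From $\|A_i\|, \|B_j\| \leq 1$ one reads off $\|[A_1,A_2]\|, \|[B_1,B_2]\| \leq 2$, so $\|\mathcal{S}^2\| \leq 8$ and therefore $\|\mathcal{S}\| \leq 2\sqrt{2}$. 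To close, I would use that $W$ is a history density matrix, equivalently a normal state on the $\mathcal{C}^{*}$-algebra of history operators, so that $|\operatorname{Tr}(\mathcal{S}W)| \leq \|\mathcal{S}\|$. This yields $S_{LGI} \leq 2\sqrt{2}$ exactly as in the spatial CHSH derivation.

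The step I expect to be the real obstacle is not the algebra above but the rigorous justification of $[A_i, B_j]=0$ inside the history algebra. In the spatial Cirel'son theorem this commutation is an a priori physical input reflecting space-like separation; in the temporal setting it has to be extracted from the $\odot$-tensor structure together with the measurement-injection rule $|\widetilde H) = \alpha\,\mathbb{A}\mathbb{B}\,|H)\,\mathbb{A}^{\dag}\mathbb{B}^{\dag}$ discussed just before the lemma. One must check carefully that injecting $A_i^{(1)}$ at $t_1$ and $B_j^{(2)}$ at $t_2$ really produces history observables supported on disjoint tensor slots, so that the commutator vanishes identically in the algebra rather than merely on a particular history state. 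Once that point is nailed down, the remainder is a mechanical translation of Cirel'son's proof, and the saturation value $2\sqrt{2}$ follows with no further input from the dynamics (the bridging operators enter only through consistency, not through the bound itself).
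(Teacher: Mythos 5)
Your proposal is correct and lands on the same overall strategy as the paper --- reduce the temporal CHSH/LGI expression to Cirel'son's bound inside the $\mathcal{C}^{*}$-algebra of history operators, using $A_i^2=B_j^2=I$ and the commutativity $[A_i,B_j]=0$ supplied by the $\odot$-tensor structure --- but the algebraic engine differs. The paper argues first by analogy (the dichotomic observables ``can be interpreted as spin components,'' so the well-known spatial saturation at $2\sqrt{2}$ carries over) and then cites, as an additional justification, the sum-of-squares operator inequality
\[
A_{1}B_{1}+A_{1}B_{2}+A_{2}B_{1}-A_{2}B_{2}\;\leq\;\tfrac{1}{\sqrt{2}}\bigl(A_{1}^2+A_{2}^2+B_{1}^2+B_{2}^2\bigr)\;\leq\;2\sqrt{2}\,I,
\]
whereas you square the CHSH operator and use the Landau-type identity $\mathcal{S}^2=4I-[A_1,A_2][B_1,B_2]$ together with a norm estimate. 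Both routes are standard proofs of the Tsirelson bound and both hinge on exactly the same unproved input, namely that observables injected at $t_1$ and $t_2$ occupy disjoint $\odot$-slots and hence commute as elements of the history algebra; the paper simply asserts this (``due to the structure of the $\mathcal{C}^{*}$-Algebra of history operators with $\odot$-tensor operation''), while you correctly single it out as the one step that genuinely needs an argument in the temporal setting, since commutation is no longer guaranteed by spacelike separation. In that respect your write-up is the more careful of the two: your version makes explicit where the bound could fail (a nonvanishing $[A_i,B_j]$ would break both your identity and the paper's operator inequality), and it also yields the quantitative mechanism of violation through the commutators $[A_1,A_2]$ and $[B_1,B_2]$, which the paper's sum-of-squares route hides.
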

\begin{proof}
The proof of this observation can be performed in similarity to the spatial version of CHSH-Bell inequality under assumption that the states are represented by
entangled history states and for two possible measurements $\{A_{1}^{(1)}, A_{2}^{(1)}\}$ at time $t_{1}$ and two measurements $\{B_{1}^{(1)}, B_{2}^{(1)}\}$ at time $t_{2}$. These operators can be of dimension $2\times2$ meeting the condition $A_{i}^{2}=B_{j}^{2}=I$. Therefore, they can be interpreted as spin components along two different directions. In consequence, it is well-known that the above inequality is saturated for $2\sqrt{2}$ for a linear combination of tensor spin correlation that holds also for temporal correlations.
Additionally, one could also apply for this temporal inequality reasoning based on the following obvious finding \cite{Tsirelson} that holds also for the temporal scenario due to the structure of $\mathcal{C}^{*}$-Algebra of history operators with $\odot$-tensor operation:
\begin{eqnarray}
A_{1}B_{1}+A_{1}B_{2}+A_{2}B_{1}-A_{2}B_{2}&\leq&\\
\frac{1}{\sqrt{2}}(A_{1}^2+A_{2}^2+B_{1}^2+B_{2}^2)&\leq&
2\sqrt{2}I. \nonumber
\end{eqnarray}
\end{proof}

\section{CONCLUSIONS}
In this paper we presented a concept of quantum entanglement in time on the ground of the consistent histories framework in the extended version including entangled histories.
We introduced a necessary partial trace operator over histories which simplify analysis of reduced histories. Moreover, we discussed
monogamous properties of a particular quantum entangled history proving that
quantum entanglement in time has properties similar to quantum entanglement in space.

It has been also pointed out that
a Tsirelson-like bound can be calculated for Leggett-Garg inequalities analytically applying entangled histories which
is a new result in comparison to the limits calculated numerically by means of semi-definite programming.

However, there are still many open problems and questions related to this field. Entangled histories approach is a substantial modification of the original consistent histories approach, especially in relation to the entanglement in time introducing non-locality of time into the theory.
Future research can be focused on analysis of non-locality in time and finding more appropriate mathematical structures that will enable easier analysis of measurements in different reference frames.

Monogamy of entanglement in time and non-locality in time can be most likely applied to quantum cryptography, and should give new insights into non-sequential quantum algorithms and information processing. Finally, this matter is fundamental for understanding relativistic quantum information theory and brings new prospects for quantum gravity theory.

% Sections that will go in second font

% Acknowledgement
\section{ACKNOWLEDGMENTS}
Acknowledgments to Pawel Horodecki for critical comments and discussions related to this paper. This work was supported partially by the ERC
grant QOLAPS. Part of this work was performed at the National Quantum Information Center in Gdansk.

% References

\nocite{*}
\bibliographystyle{aipnum-cp}%
\bibliography{sample}%

\begin{thebibliography}{8}
\bibitem{Aspect}
A. Aspect, J. Dalibard, G. Roger, Phys. Rev. Lett. {\bf 49}, 1804 (1982).
\bibitem{Bell}
J. S. Bell, Physics {\bf 1} (3), 195 (1964).
\bibitem{CHSHClauser}
J.F. Clauser, M.A. Horne, A. Shimony, and R.A. Holt,
Phys. Rev. Lett. {\bf 23}, 880 (1969).
\bibitem{Wootters}
Valerie Coffman, Joydip Kundu, William K. Wootters, Phys. Rev. A {\bf 61}, 052306 (2000).
\bibitem{EPR}
A. Einstein, B. Podolsky, N. Rosen, Phys. Rev. {\bf 47}, 777 (1935).
\bibitem{Freedman}
S. J. Freedman, J. F. Clauser, Phys. Rev. Lett. {\bf 28}, 938 (1972).
\bibitem{EX1}
C. Robens et al., Phys. Rev. A 5, 011003 (2015).
\bibitem{EX2}
A. Asadin, C. Brukner and P. Rabl, Phys. Rev. Lett. {\bf 112}, 190402 (2014).
\bibitem{EX3}
H. Katiyar et al., Phys. Rev. A {\bf 87}, 052102 (2013).
\bibitem{EX4}
A. M. Souza, I. S. Oliveira and R. S. Sarthour, New J. Phys. 13, 053023 (2011).
\bibitem{LGI2}
A. J. Leggett, J. Phys.: Condens. Matter {\bf 14}, R415 (2002).
%\bibitem{Lutk1}
%G. O. Myhr, N. L\"{u}tkenhaus, Phys. Rev. A {\bf 79}, 062307 (2009).
%\bibitem{Lutk2}
%G. O. Myhr et. al., Symmetric extensions in two-way quantum key distribution,
%Preprint quant-ph/0812.3607v1.
%\bibitem{Lutk3}
%T. Moroder, N. L\"{u}tkenhaus, Phys. Rev. A 74, 052301 (2006).
%\bibitem{Lutk4}
%Jianxin Chen et. al, Phys. Rev. A {\bf 90}, 032318 (2014).
%\bibitem{Devetak05}
%I. Devetak and A. Winter, Proc. R. Soc. Lond. A {\bf 461}, 207 (2005).
%\bibitem{KLi}
%K. Li, A. Winter, Squashed entanglement, k-extendibility, quantum Markov chains, and recovery maps, Preprint quant-ph/1410.4184 (2014).
%\bibitem{MNPH}
%M. L. Nowakowski, P. Horodecki,  J. Phys. A: Math. Theor. 42, 135306 (2009).
%\bibitem{MNPH2}
%M. L. Nowakowski, P. Horodecki,  Phys. Rev. A 82, 042342 (2010).
\bibitem{MNPH4}
M. L. Nowakowski, J. Phys. A: Math. Theor. {\bf 49}, 385301 (2016).
\bibitem{MNPH5}
M. L. Nowakowski, Monogamy of quantum entanglement in time,  Preprint quant-ph/1604.03976 (2016).
\bibitem{MNPH3}
M. Nowakowski, P. Horodecki, in preparation.
\bibitem{Vaidman}
Y. Aharonov, L. Vaidman, The two-state vector formalism of quantum mechanics, in Time in Quantum Mechanics, Springer, 369 (2002).
\bibitem{RG1}
R. Griffiths, Journal of Statistical Physics 36.1-2, 219-72 (1984).
\bibitem{RG2}
R. Griffiths,  Phys. Rev. Lett {\bf 70}, 2201-204 (1993).
\bibitem{RG3}
R. Griffiths, Consistent Quantum Theory, Cambridge: Cambridge UP, 2002.
\bibitem{RG4}
R. Griffiths, Consistent Quantum Measurements, Preprint quant-ph/1501.04813   (2015).
\bibitem{RG5}
R. Griffiths, Phys. Rev. A 54, 2759 (1996).
\bibitem{RG6}
R. Griffiths, R. Omn`es, Physics Today
{\bf 52}, 26-31 (1999).
\bibitem{RG7}
R. Griffiths, Private communication.
\bibitem{Goldstein}
S. Goldstein, D. Page, Phys. Rev. Lett. {\bf 74}, 3715 (1995).
\bibitem{Hartle1}
J. B. Hartle, Generalizing Quantum Mechanics for Quantum Spacetime, The Quantum Structure of Space and Time: ed. by D. Gross, M. Henneaux, and A. Sevrin, World Scientific, Singapore, (2007).
\bibitem{Hartle2}
J. B. Hartle, Phys. Rev. A, {\bf 70}, 02210 (2004).
\bibitem{Hartle3}
J. B. Hartle, Phys.Rev. A, {\bf 69},  042111 (2004).
\bibitem{Hartle4}
M. Gell-Mann, J. B. Hartle, Phys. Rev. A {\bf 89}, 052125 (2014).
\bibitem{CJ1}
C. J. Isham, Journal of Math. Phys. {\bf 35}, 2157 (1994).
\bibitem{CJ2}
C. J. Isham and N. Linden, Journal of Math.
Phys. {\bf 35}, 5452 (1994).
\bibitem{WC1}
J. Cotler, W. Wilczek, Entangled Histories, Preprint quant-ph/1502.02480 (2015).
\bibitem{WC2}
J. Cotler, W. Wilczek, Bell Tests for Histories, Preprint quant-ph/1503.06458  (2015).
\bibitem{Wootters}
Valerie Coffman, Joydip Kundu, William K. Wootters, Phys. Rev. A {\bf 61}, 052306 (2000).
%\bibitem{Schwartz}
%M. D. Schwartz, Quantum Field Theory and the Standard Model, Cambridge, 2013.
\bibitem{Feynman}
R. P. Feynman, Space-time approach to non-relativistic quantum mechanics, Rev. Mod. Phys.
{\bf 20}, 367 (1948).
%\bibitem{D2}
%A. C. Doherty, P. A. Parillo and F. M. Spedalieri, Phys. Rev. A
%{\bf 69}, 022308 (2004).
%\bibitem{T1}
%B. M. Terhal, A. C. Doherty and D. Schwab, Phys. Rev. Lett. {\bf
%90}, (2003).
%\bibitem{MN3}
%M. Nowakowski, J. Phys. A: Math. Theor. {\bf 49}, 385301 (2016).
\bibitem{Tsirelson}
B. S. Cirel'son, Lett. Math. Phys. {\bf 4}, 93-100 (1980).
%\bibitem{Bell}
%J. S. Bell, Physics (New York) 1, 195 (1964).
\bibitem{LGI}
A. J. Leggett and A. Garg, Phys. Rev. Lett. 54, 857 (1985).
\bibitem{MRealism}
A. J. Leggett, J. Phys.: Cond. Mat. 14, R415 (2002).
\bibitem{Vedral}
C. Brukner, S. Taylor, S. Cheung, V. Vedral, Quantum Entanglement in Time, Preprint quant-ph/0402127, (2004).
\bibitem{Fritz}
T. Fritz, New J. Phys. 12, 083055 (2010).
\bibitem{Budroni}
C. Budroni, T. Moroder, M. Kleinmann, O. Gühne, Phys. Rev. Lett. 111, 020403 (2013).

\end{thebibliography}

\end{document}